 \def\xx{{\bf x}}
 \def\yy{{\bf y}}
\def\qq{{\bf q}}
\DeclareMathOperator*{\subjto}{s.t.}
\renewcommand{\Re}{{\mathbb{R}}}
\newcommand{\T}{\mathbb{T}}
\newcommand{\x}{{\bf x}}
\newcommand{\y}{{\bf y}}
\renewcommand{\aa}{{\bf a}}
\newcommand{\Phii}{\mathsf{\Phi}}
\newcommand{\argmin}{\arg \min}
\newcommand{\Ce}{\mathbb{C}}
\newcommand{\sA}{\mathcal{A}}
\newcommand{\eg}{\textit{e.g.,~}}
\newtheorem{thm}{Theorem}
\newtheorem{thmm}{Theorem}
 \newtheorem{df}[thmm]{Definition}
\newtheorem{cor}[thm]{Corollary}
\newtheorem{lem}[thm]{Lemma}
\begin{document}
%
\title{On Conditions for Uniqueness in Sparse Phase Retrieval}
%
%
%

\author{Henrik~Ohlsson,~\IEEEmembership{Member,~IEEE,} and
        Yonina~C.~Eldar,~\IEEEmembership{Fellow,~IEEE}
\thanks{Ohlsson is with the Dept. of Electrical Engineering and Computer
   Sciences, University of California, Berkeley, CA, USA, and with the
   Dept. of Electrical Engineering, Link\"oping University,
SE-581 83 Link\"oping, Sweden.
 e-mail: ohlsson@eecs.berkeley.edu.}
\thanks{Eldar is with the Dept. of Electrical Engineering, Technion -- Israel Institute of Technology, Haifa 32000, Israel.}}
%
%

\markboth{Journal of \LaTeX\ Class Files,~Vol.~11, No.~4, December~2012}%
{Shell \MakeLowercase{\textit{et al.}}: Bare Demo of IEEEtran.cls for Journals}
%



\maketitle

\begin{abstract}
The phase retrieval problem has a long history and
is an important  problem in many areas of
optics.
Theoretical understanding  
of phase retrieval is still limited and
 fundamental questions
such as uniqueness  and stability of the recovered solution 
 are not yet fully understood.
This paper provides several additions to the theoretical understanding
of sparse phase retrieval.
In particular we show that
 if the
measurement ensemble can be chosen freely, as few as
$4k-1$ phaseless  measurements suffice to
guarantee uniqueness of a $k$-sparse $M$-dimensional  real
solution. We also prove that
$2(k^2-k+1)$  Fourier magnitude
measurements are sufficient under rather general conditions.

\end{abstract}

\begin{IEEEkeywords}
Phase retrieval, complement property, compressive phase retrieval.
\end{IEEEkeywords}

%
\IEEEpeerreviewmaketitle

\section{Introduction}
\IEEEPARstart{I}{n} many areas in optics, physical limitations make it imposable to
measure the phase. If the signal is real, then the sign is lost and if
the signal is complex, the phase. Even though the phase is not
measured, it
often contains valuable information.
 For example, in X-ray crystallography \cite{Millane:90,Harrison:93},  only the
magnitude of the Fourier transform is observed.
If
the phase would be observable, then the inverse Fourier transform would
directly give the atomic structure of the crystal considered.
Therefore the phase has to be retrieved before
structural information can be explored.

The problem of retrieving the phase from intensity measurements is
often referred to as the \textit{phase retrieval problem}.
The problem is by nature often
ill-posed and early methods relied on additional information
about the sought signal, such as band limitation, nonzero support,
and nonnegativity to successfully recover the  signal. The
Gerchberg-Saxton algorithm is
one of the popular methods for recovery. It utilizes a prior on the
support  and alternates between the Fourier and
inverse Fourier transforms to obtain a phase estimate
from a set of Fourier magnitude
measurements \cite{GerchbergR1972,FienupJ1982}.
More recent development \cite{Candes:11b,Candes:11,Eldar:12}  has shown that \eg random collections of
measurement vectors are rich enough to provide a well posed
phase retrieval problem. 

There has also been recent interest in sparse phase retrieval.
In contrast to the literature on compressive sensing, which assumes a
linear relation between measurements and the sparse unknown and is
quite mature, the literature on
sparse phase retrieval is still developing. Recent work has
demonstrated that as in the case of linear measurements, the number of intensity measurements required to
recover the true solution can be reduced by
taking into account that the sought signal is sparse
\cite{MoravecM2007,Shechtman:11,ohlsson:11m,Beck:2012,Szameit:12,Eldar:12,Shechtman133}.


Even though \cite{Candes:11b,Candes:11,Eldar:12} showed that there exist
collections of measurement vectors that provide  accurate phase estimates,
it is still not fully understood what properties these sets need
to satisfy  for the phase retrieval map to be injective. The first attempt to try to characterize these properties
was given in
\cite{Balan:2006} (later refined in
\cite{Bandeira:13}).
In particular the authors derived necessary and sufficient
conditions for injectivity for a real signal and real collection of
measurement vectors.
Injectivity in the real case was also discussed in
\cite{Eldar:12}. 
For the complex case (complex signal and complex collection of
measurement vectors), \cite{Bandeira:13} gave necessary conditions
for injectivity.

As for sparse phase retrieval, it was shown in \cite{Eldar:12} that
$\mathcal{O} (k \log(M/k))$ real measurement vectors
are sufficient for stable recovery of a $k$-sparse $M$-dimensional real
signal.  This means that the number of
measurements needed for  recovery from quadratic measurements is
the same, up to a multiplicative scalar, as for linear measurements.
The work in \cite{Li:2012} extended results presented in \cite{Balan:2006}  and  derived bounds on the number of measurements needed for unique recovery in the sparse real  case (real measurement vectors and real sparse signal)
  and for the complex sparse case (complex measurement
  vectors and complex sparse signal). For a $k$-sparse signal, $4k-1$ measurements were
  reported sufficient in the real case and $8k-2$ in the complex
  case. However, no characterization of the
  properties that lead to a unique recovery was given in
  \cite{Li:2012}. In \cite{Ranieri2013} the authors discuss sparse recovery from
  Fourier magnitude  measurements and show that, under general
  conditions, the sought signal is uniquely defined by the
  magnitude of the full Fourier transform.

 The contribution of the current letter is twofold. 
We first give  a  characterization of properties
leading to unique recovery for sparse signals.
In particular we show that only $4k-1$  phaseless measurements suffice to
guarantee uniqueness of a $k$-sparse $M$-dimensional  real solution
while  $2M-1$  measurements are required for a general $M$-dimensional
real solution. Note that \cite{Li:2012} also showed that $4k-1$
phaseless measurements suffice. However, the authors did not provide
any condition for when this is sufficient.
Secondly we  consider the important case of sparse
recovery from Fourier magnitude measurements.
We show that under
rather mild conditions, $2(k^2-k+1)$  Fourier magnitude
measurements  guarantee
uniqueness.
This improves on  \cite{Ranieri2013}  which only considered recovery
from a full Fourier ensemble, namely, $M$ measurements.

\section{The Phase Retrieval Problem}
Define $\Phi$ as a collection of measurement vectors $\Phi=\{ \varphi_n\}_{n=1}^N \in
\Re^M$ (or  $\Ce^M$) and  consider the problem of retrieving a vector
$\x  $  from
$N$ intensity measurements
\begin{equation}
y_n=|\langle \varphi_n,\x\rangle |^2, \quad n=1,\dots, N.
\end{equation}
This problem is referred to as the phase retrieval problem.   Introduce
the operator $\mathcal{A}$ as
$(\mathcal{A}(\cdot))(n) =|\langle \varphi_n,\cdot \rangle |^2$.
Note that if $\mathcal{A}(\cdot):\Ce^M \rightarrow \Re^N$   then
$\mathcal{A}(\x)=\mathcal{A}(c\x), \, c\in \Ce, \, |c|=1,$ and if
$\mathcal{A}(\cdot):\Re^M \rightarrow \Re^N$ then
$\mathcal{A}(\x)=\mathcal{A}(-\x)$. The map
$\mathcal{A}(\cdot) $ is hence not injective and
$\x$ can  never be uniquely defined more than up to a global
unit complex scalar if  $\x$ is complex   and a global
sign change if $\x$ is real.
Therefore, when referring to a unique
solution and injectivity, it is always understood that it is either up
to a unit complex scalar or a global sign
change.
We henceforth consider the map $\mathcal{A}(\cdot) :\Ce^M/\T
\rightarrow \Re^N$ (where $\T$ is the complex unit circle)  if $\x$ is
complex and
$\mathcal{A}(\cdot) :\Re^M/\{\pm 1\} \rightarrow \Re^N$ if $\x$ is
known to be real.


As shown in \cite{Balan:2006,Bandeira:13},  the
\textit{complement property} is particularly useful when considering 
the theory of phase retrieval.
\begin{df}[Complement property \cite{Balan:2006,Bandeira:13}]\label{def:comp}
We say that $\Phi=\{\varphi_n\}_{n=1}^N \in \Re^M (\Ce^M)$   satisfies the complement
 property if for every $S \subseteq \{1,\dots,N\}$, either $\{\varphi_{n}\}_{n \in
   S}$ or $\{\varphi_{n}\}_{n \in S^c}$ span $\Re^M (\Ce^M) $. Here $
 S^c=\{n:n\in \{1,\dots,N\} ,n \notin S\}$. \end{df}

\subsection{Real Measurement Vectors and a Real Signal}\label{sec:realphase}
Using the complement property, the following theorem on the injectivity of intensity
measurements using a real collection of measurement vectors was shown
in \cite{Bandeira:13}:
\begin{thm}[Injectivity in the real case (Thm.~3 of \cite{Bandeira:13})]\label{thm:inj}
Let $\mathcal{A}(\xx)
:\Re^M / \{\pm 1\} \rightarrow \Re^N$ be defined by
\begin{equation}
(\sA(\xx))(n) =|\langle \varphi_n, \xx\rangle |^2, \quad \varphi_n \in
\Re^M,\, n=1,\dots,N.
\end{equation}
Then $\sA$ is injective iff $\Phi=\{\varphi_n\}_{n=1}^N \in \Re^M$   satisfies the complement
 property.
\end{thm}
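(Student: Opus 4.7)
The plan is to prove both directions by exploiting the quadratic structure of $\mathcal{A}$, which forces $\langle \varphi_n, \xx\rangle = \pm \langle \varphi_n, \yy \rangle$ whenever $\mathcal{A}(\xx)=\mathcal{A}(\yy)$, and then linking the sign pattern to a partition of $\{1,\dots,N\}$.

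\textbf{Sufficiency (complement property $\Rightarrow$ injectivity).} Suppose $\mathcal{A}(\xx)=\mathcal{A}(\yy)$. For each $n$, taking square roots gives $\langle \varphi_n,\xx\rangle = \epsilon_n \langle \varphi_n,\yy\rangle$ with $\epsilon_n \in \{+1,-1\}$. Partition the indices by defining
\begin{equation}
S=\{n:\langle \varphi_n,\xx-\yy\rangle=0\},\qquad S^c=\{n:\langle \varphi_n,\xx+\yy\rangle=0\},
\end{equation}
where indices on which both inner products vanish can be assigned arbitrarily. By the complement property, either $\{\varphi_n\}_{n\in S}$ spans $\Re^M$, in which case $\xx-\yy=0$, or $\{\varphi_n\}_{n\in S^c}$ spans $\Re^M$, in which case $\xx+\yy=0$. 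Either way, $\xx=\pm\yy$, i.e., $\xx$ and $\yy$ coincide in $\Re^M/\{\pm 1\}$.

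\textbf{Necessity (injectivity $\Rightarrow$ complement property).} Argue by contrapositive: assume there exists some $S$ for which neither $\{\varphi_n\}_{n\in S}$ nor $\{\varphi_n\}_{n\in S^c}$ spans $\Re^M$. Then the orthogonal complements of these spans are both nontrivial, so we can pick nonzero $\uu,\vv\in\Re^M$ with $\langle \varphi_n,\uu\rangle=0$ for $n\in S$ and $\langle \varphi_n,\vv\rangle=0$ for $n\in S^c$. Set $\xx=\uu+\vv$ and $\yy=\uu-\vv$. A direct check shows $\langle \varphi_n,\xx\rangle=\pm\langle \varphi_n,\yy\rangle$ on $S$ and on $S^c$, so $\mathcal{A}(\xx)=\mathcal{A}(\yy)$. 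Since $\uu\neq 0$ and $\vv\neq 0$, we have $\xx\neq\pm\yy$, contradicting injectivity on $\Re^M/\{\pm 1\}$.

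\textbf{Main obstacle.} The only subtlety is the sign-pattern bookkeeping in the sufficiency direction: an index $n$ where $\langle\varphi_n,\yy\rangle=0$ satisfies both relations and must be placed consistently, but since such an index contributes a zero row to both $\xx-\yy$ and $\xx+\yy$ conditions, the assignment is harmless. Once that is handled, both directions reduce to linear-algebraic span arguments and the proof is short.
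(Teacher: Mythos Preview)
Your proof is correct. Note, however, that the paper does not supply its own proof of this theorem: it is quoted verbatim as Theorem~3 of \cite{Bandeira:13} (building on \cite{Balan:2006}) and used as a black box. Your argument is precisely the standard proof given in those references---the polarization identity $|\langle\varphi_n,\xx\rangle|^2-|\langle\varphi_n,\yy\rangle|^2=\langle\varphi_n,\xx+\yy\rangle\langle\varphi_n,\xx-\yy\rangle$ forces each index into one of two sets, and the complement property is exactly what is needed to conclude $\xx=\pm\yy$; conversely, failure of the complement property lets one build $\xx=\uu+\vv$, $\yy=\uu-\vv$ from vectors in the two orthogonal complements. There is nothing to add or compare.
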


It is now easy to show that $2M-1$ intensity measurements are
necessary   for $\sA$
to be injective. This bound was also given (without a proof) in \cite{Bandeira:13}.

\begin{cor}
\label{cor:lowbnd} To satisfy the complement property we must have  $N \geq 2M-1$
intensity measurements.
Any $N < 2M-1$ intensity measurements  do not provide an injective map
$\sA$.
\end{cor}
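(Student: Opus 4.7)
The plan is to reduce the statement to a simple counting argument by invoking Theorem~\ref{thm:inj}. Since injectivity of $\sA$ is equivalent to $\Phi$ satisfying the complement property, it suffices to show that the complement property forces $N \geq 2M-1$; the second sentence of the corollary then follows immediately by contraposition through Theorem~\ref{thm:inj}.

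To establish the contrapositive, I would suppose $N \leq 2M - 2$ and exhibit an explicit subset $S \subseteq \{1,\dots,N\}$ that defeats the complement property. The idea is to split $\{1,\dots,N\}$ as evenly as possible: take any $S$ with $|S| = \lfloor N/2 \rfloor$, so that $|S^c| = \lceil N/2 \rceil$. Because $N \leq 2M-2$, both $|S|$ and $|S^c|$ are at most $M-1$. A collection of at most $M-1$ vectors in $\Re^M$ cannot span $\Re^M$, so neither $\{\varphi_n\}_{n \in S}$ nor $\{\varphi_n\}_{n \in S^c}$ spans $\Re^M$. This directly contradicts Definition~\ref{def:comp}, so $\Phi$ fails the complement property whenever $N < 2M - 1$.

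Combining the two observations, $N \geq 2M - 1$ is necessary for the complement property, and by Theorem~\ref{thm:inj} any configuration with $N < 2M - 1$ yields a non-injective $\sA$. There is no real obstacle here beyond a dimension count; the only subtlety to check is that the partition is genuinely available, i.e.\ that the bound $\lceil N/2 \rceil \leq M-1$ really follows from $N \leq 2M-2$, which is immediate. The content of the corollary is essentially that one can always isolate half the measurement vectors into a deficient subspace when there are too few of them overall.
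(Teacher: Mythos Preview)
Your argument is correct and matches the paper's proof essentially line for line: both reduce to Theorem~\ref{thm:inj} and then exhibit a subset $S$ with $|S|<M$ and $|S^c|<M$ when $N<2M-1$. The only cosmetic difference is that the paper fixes $|S|=M-1$ outright rather than splitting evenly, but the underlying dimension count is identical.
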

\begin{proof}
From Theorem \ref{thm:inj}
it is sufficient to show that $N<2M-1$
vectors can
never  satisfy the  complement property. By  definition, $\Phi$ satisfies the complement property if
either $\{\varphi_n\}_{n\in S}$ or $\{\varphi_n\}_{n\in S^c} $ span
$\Re^M$ for any $S \subseteq \{1,\dots,N\}$. Take $S^*\subseteq
\{1,\dots,N \}$ to be any  arbitrary set
 such that $|S^*|=M-1$. In this case
 $|S^{*c}|=N-M+1<2M-1-M+1=M$ if $N<2M-1$.  Since both   $|S^*|<M$  and $|S^{*c}|<M$, neither
$\{\varphi_n\}_{n\in S^*}$ or $\{\varphi_n\}_{n\in S^{*c}} $
span $\Re^M$.
\end{proof}


It can easily be verified that $2M-1$ measurement vectors
independently drawn from \eg an  $M$-dimensional standard Gaussian
distribution  (zero mean, unit variance)
satisfy  the complement property with probability 1. According to Theorem \ref{thm:inj} it is hence possible
to uniquely recover an $M$-dimensional real signal from $2M-1$ intensity measurements.

\subsection{Complex  Measurement Vectors and a Complex Signal}
Let us now consider the  complex case, when the
measurement vectors  are complex and $\xx\in \Ce^M$. It was
recently shown in  \cite{Bandeira:13} that the complement property is
a necessary condition for injectivity in this case.
\begin{thm}[Injectivity in the complex case (Thm.~7 of \cite{Bandeira:13})]\label{thm:injcomp}
Let $\mathcal{A}(\xx)
:\Ce^M /  \T  \rightarrow \Re^N$ be defined by
\begin{equation}
(\sA(\xx))(n) =|\langle \varphi_n, \xx\rangle |^2, \quad \varphi_n \in
\Ce^M,\, n=1,\dots,N.
\end{equation}
If $\sA$ is injective then $\Phi=\{\varphi_n\}_{n=1}^N \in \Ce^M$   satisfies the complement property.
\end{thm}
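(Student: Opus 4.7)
The plan is to prove the contrapositive: if $\Phi$ fails the complement property, then $\sA$ is not injective on $\Ce^M/\T$. So suppose there is a subset $S\subseteq\{1,\dots,N\}$ for which neither $\{\varphi_n\}_{n\in S}$ nor $\{\varphi_n\}_{n\in S^c}$ spans $\Ce^M$. Let
$U=\bigl(\operatorname{span}\{\varphi_n\}_{n\in S^c}\bigr)^{\perp}$ and $V=\bigl(\operatorname{span}\{\varphi_n\}_{n\in S}\bigr)^{\perp}$; both are nonzero subspaces of $\Ce^M$.

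The natural construction, already familiar from the real case, is to pick nonzero $u\in U$ and $v\in V$ and to set $x_1=u+v$, $x_2=u-v$. For $n\in S$ the vector $v$ is orthogonal to $\varphi_n$, hence $\langle\varphi_n,x_1\rangle=\langle\varphi_n,x_2\rangle=\langle\varphi_n,u\rangle$; for $n\in S^c$ the vector $u$ is orthogonal to $\varphi_n$, hence $\langle\varphi_n,x_1\rangle=\langle\varphi_n,v\rangle=-\langle\varphi_n,x_2\rangle$. In both cases the moduli agree, so $\sA(x_1)=\sA(x_2)$.

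The main obstacle is the extra equivalence in the complex case: we must rule out $x_1=c\,x_2$ for some $c\in\T$, which would make $x_1,x_2$ identified in $\Ce^M/\T$ and spoil the counterexample. A short calculation shows that $u+v=c(u-v)$ forces $(1-c)u=-(1+c)v$, so whenever $u$ and $v$ are linearly independent no such $c$ can exist (one would need simultaneously $c=1$ and $c=-1$). Thus everything hinges on being able to select $u\in U$ and $v\in V$ that are linearly independent.

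To secure this I would split on whether $U\cap V=\{0\}$. Observe that $U\cap V=\bigl(\operatorname{span}\{\varphi_n\}_{n=1}^{N}\bigr)^{\perp}$. If this intersection is trivial, then picking any nonzero $u\in U$ and $v\in V$ automatically yields a linearly independent pair (otherwise $u\in V$ would force $u\in U\cap V=\{0\}$), and the construction above closes the argument. If instead $U\cap V$ contains a nonzero vector $w$, then $w$ is orthogonal to every $\varphi_n$, so $\sA(w)=0=\sA(0)$ while $w\not\equiv 0$ in $\Ce^M/\T$, which already witnesses non-injectivity. Either way, $\sA$ fails to be injective, which establishes the contrapositive and hence the theorem.
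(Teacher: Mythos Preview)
Your argument is correct. The contrapositive is set up cleanly, the construction $x_1=u+v$, $x_2=u-v$ does give $\sA(x_1)=\sA(x_2)$, and the case split on whether $U\cap V=\{0\}$ handles the complex equivalence $x_1\sim c\,x_2$ properly: in the first case linear independence of $u,v$ rules out any $c\in\T$, and in the second case a nonzero $w\in U\cap V$ already kills injectivity directly.

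One remark on comparison: the present paper does not actually supply a proof of this theorem---it is quoted verbatim as Theorem~7 of \cite{Bandeira:13} and used as background. Your proof is essentially the argument given in that reference, so there is nothing to contrast; you have reproduced the intended proof.

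A minor polish you could add (purely cosmetic): once $u$ and $v$ are linearly independent it is automatic that $x_1=u+v$ and $x_2=u-v$ are nonzero, so the classes in $\Ce^M/\T$ are well defined; you use this implicitly but it is worth a word.
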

It is easy to verify that the complement property is only necessary
and not sufficient for injectivity. An example of a set of
measurement vectors that  satisfies the complement property but does not
provide an injective map  is given in  \cite{Bandeira:13}.
It was conjectured (but not proven) in  \cite{Bandeira:13}  that $4M-4$ \textit{generic} (see
 \cite{Bandeira:13} for
definition) measurements
are both necessary and sufficient  for unique recovery.

\section{Uniqueness in Sparse Phase Retrieval}
We now build on previous results and generalize them to the analysis
of sparse phase retrieval. 
We start by studying a
collection of real measurement vectors and then extend the results to
an important class of complex measurement vectors, a partial Fourier
basis, in Section~\ref{sec:Fourier}.

\subsection{Real  Measurement Vectors and a Sparse Real Signal}
To handle sparse signals, it is convenient to introduce the following
 less restrictive  version of the complement property:
\begin{df}[$k$-complement property]
We say that $\Phi=\{\varphi_n\}_{n=1}^N $   satisfies the $k$-complement
 property if for every $S \subseteq \{1,\dots,N\}$ and subset $K
 \subseteq \{1,\dots,M\},$ $|K|=k$, either $\{\varphi_{n,K}\}_{n \in
   S}$ or $\{\varphi_{n,K}\}_{n \in S^c}$ span $\Re^k$. The notation $\varphi_{n,K}$
denotes the elements indexed by $K$ of the $n$th measurement
vector~$\varphi_n$. \end{df}
The $k$-complement property reduces to the
complement property of Definition~\ref{def:comp} when
$k=M$. If $k<M$ then the $k$-complement property is
less restrictive.  Furthermore, if $\Phi$ satisfies the
$k$-complement property then it also satisfies the $(k-1)$-complement
property.

We are now ready to state the following theorem on
unique recovery of a $k$-sparse real signal:
\begin{thm}[Unique recovery in the sparse real case]\label{thm:uniqsparse}
Let $\mathcal{A}(\xx)
:\Re^M / \{\pm 1\} \rightarrow \Re^N$ be defined by
\begin{equation}
\label{eq:axs}
(\sA(\xx))(n) =|\langle \varphi_n, \xx\rangle |^2, \quad \varphi_n \in
\Re^M,\, n=1,\dots,N,
\end{equation}
and assume that we are given $\yy=\sA(\xx_0)\in \Re^N$. 
If $\mathcal{A}$ satisfies the $2\|\x_0\|_0$-complement property, then
$\x_0$ is the unique real  vector satisfying the given measurements  with $\|\x_0\|_0$ or fewer nonzero
elements. Thus, $\x_0$ can be found as the solution to
\begin{equation}\label{eq:l0prob1}\begin{aligned}
\x_0=\argmin_{\xx \in \Re^M} \quad & \|\xx\|_0 \quad  \subjto  \quad  \y=\mathcal{A}(\xx).
\end{aligned}
\end{equation}
\end{thm}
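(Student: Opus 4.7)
The plan is to reduce to the non-sparse case (Theorem~\ref{thm:inj}) by restricting every vector and every measurement to a common support of size at most $2\|\xx_0\|_0$. Let $k=\|\xx_0\|_0$, and suppose for contradiction that there exists $\xx_1\in\Re^M$ with $\|\xx_1\|_0\leq k$, $\sA(\xx_1)=\yy=\sA(\xx_0)$, and $\xx_1\neq \pm\xx_0$. Define $K=\mathrm{supp}(\xx_0)\cup\mathrm{supp}(\xx_1)$, so that $|K|\leq 2k$.

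\textbf{Passing to the subspace indexed by $K$.} Let $\tilde\xx_0,\tilde\xx_1\in\Re^{|K|}$ denote the restrictions of $\xx_0$ and $\xx_1$ to the coordinates in $K$. Because both vectors vanish outside $K$, for every $n$,
\begin{equation}
\langle\varphi_n,\xx_i\rangle=\langle\varphi_{n,K},\tilde\xx_i\rangle,\qquad i=0,1,
\end{equation}
so $|\langle\varphi_{n,K},\tilde\xx_0\rangle|^2=|\langle\varphi_{n,K},\tilde\xx_1\rangle|^2$ for all $n=1,\dots,N$. Thus the restricted measurement map $\tilde\sA:\Re^{|K|}/\{\pm1\}\to\Re^N$ defined by $(\tilde\sA(\zz))(n)=|\langle\varphi_{n,K},\zz\rangle|^2$ fails to be injective at the pair $(\tilde\xx_0,\tilde\xx_1)$ (which are not sign-equivalent since $\xx_0\neq\pm\xx_1$).

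\textbf{Applying the complement property.} The hypothesis gives the $2k$-complement property, and, as noted after the definition, this downward propagates to the $|K|$-complement property for any $|K|\leq 2k$. Instantiating the $|K|$-complement property at this specific $K$ yields: for every $S\subseteq\{1,\dots,N\}$, either $\{\varphi_{n,K}\}_{n\in S}$ or $\{\varphi_{n,K}\}_{n\in S^c}$ spans $\Re^{|K|}$. That is exactly the ordinary complement property for the reduced ensemble $\{\varphi_{n,K}\}_{n=1}^N\subset\Re^{|K|}$. By Theorem~\ref{thm:inj} applied in dimension $|K|$, the map $\tilde\sA$ is injective, contradicting the previous paragraph. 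Hence no such $\xx_1$ exists, proving uniqueness; since any minimizer of \eqref{eq:l0prob1} has at most $\|\xx_0\|_0$ nonzeros, this also identifies $\xx_0$ as the unique solution of the $\ell_0$ program.

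\textbf{Main obstacle.} The only non-routine step is the downward propagation of the $k$-complement property to smaller $k$, which is needed when $|K|<2k$; it follows by extending $K$ to any superset $K'$ of size $2k$, applying the $2k$-complement property to $K'$, and observing that the coordinate projection from $\Re^{2k}$ onto $\Re^{|K|}$ maps a spanning set to a spanning set. (Alternatively, one may embed $\xx_0$ and $\xx_1$ into a common support of size exactly $2k$ by adjoining arbitrary indices, which sidesteps the propagation argument entirely provided $2k\leq M$.)
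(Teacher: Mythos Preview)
Your proof is correct and follows essentially the same approach as the paper's: reduce to a common support of the two candidate vectors and invoke Theorem~\ref{thm:inj} on the restricted ensemble via the complement property there. The only cosmetic difference is bookkeeping: the paper directly enlarges the joint support to an index set $K^*$ of size exactly $2\|\xx_0\|_0$ and applies the $2\|\xx_0\|_0$-complement property at $K^*$, whereas you first take $K$ as the union of supports (of size $\leq 2k$) and appeal to the downward propagation of the $k$-complement property---an alternative you yourself note in your final paragraph.
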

\begin{proof}
 We prove the theorem by contradiction. Assume that $\tilde \x \neq \pm  \x_0,
 \,\|\tilde \x\|_0 \leq  \| \x_0 \|_0,\,\y=\mathcal{A}(\tilde \x) =
 \mathcal{A}(\x_0)$,  $\tilde \x \in \Re^M$. Theorem
 \ref{thm:inj} gives that if $\Phi$ associated with $\mathcal{A}$ satisfies
the $2\|\x_0\|_0$-complement property, then
$\{|\langle\varphi_{n,K},\cdot \rangle|^2\}_{n=1}^N$ is injective for all subsets $K
 \subseteq \{1,\dots,M\}$, $|K|=2\|\x_0\|_0$. Let $K^* \subseteq
 \{1,\dots,M\},\,|K^* | =2\|\x_0\|_0$, be an index set that includes the support of $\x_0$ and $\tilde \x$.  Note that $ \|
 \tilde \x\|_0+ \| \x_0\|_0\leq 2 \| \x_0\|_0=|K^*|$.
 Then
 $\{|\langle\varphi_{n,K^*},\x_{0,K^*}
 \rangle|^2\}_{n=1}^N=\{|\langle\varphi_{n,K^*},\tilde \x_{K^*}
 \rangle|^2\}_{n=1}^N=\y$.  Since $\{|\langle\varphi_{n,K},\cdot \rangle|^2\}_{n=1}^N$ is injective for all subsets $K
 \subseteq \{1,\dots,M\}$ of size  $|K|=2\|\x_0\|_0,$ it must also be
 injective for $K^*$.  We therefore conclude that 
 $\tilde \x_{K^*}=\x_{0,K^*}$ which implies that $\tilde \x=\pm \x_0$ since $K^*$ includes the support of both vectors.
\end{proof}


For a sufficiently sparse $\x$, unique recovery can hence be
guaranteed from fewer measurements than in the dense case.  We give this result as a corollary:

\begin{cor}
\label{cor:lowbnd2}
A collection of
$\min(4k-1,2M-1)$ measurement vectors suffice to uniquely recovery
any  $k$-sparse $\x$.
\end{cor}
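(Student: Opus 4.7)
The plan is to split into two cases according to which of $4k-1$ and $2M-1$ is smaller, and in each case exhibit a collection of that size satisfying the appropriate complement property, then invoke either Theorem \ref{thm:inj} or Theorem \ref{thm:uniqsparse}.

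Case 1: $k \geq M/2$, so $\min(4k-1,2M-1)=2M-1$. Here it is enough to recover \emph{any} $\xx \in \Re^M$, not just sparse ones. As noted just after Corollary \ref{cor:lowbnd}, $2M-1$ vectors drawn independently from an $M$-dimensional standard Gaussian distribution satisfy the complement property with probability one (for any fixed $S$, both $\{\varphi_n\}_{n \in S}$ and $\{\varphi_n\}_{n \in S^c}$ are a.s.\ in general position, and $|S|+|S^c|=2M-1$ forces at least one side to contain $\geq M$ vectors). Theorem \ref{thm:inj} then yields injectivity of $\sA$, hence unique recovery of $\xx_0$.

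Case 2: $k<M/2$, so $\min(4k-1,2M-1)=4k-1$. I will show that $N=4k-1$ generic vectors $\varphi_1,\dots,\varphi_N\in\Re^M$ (e.g.\ i.i.d.\ standard Gaussian) satisfy the $2k$-complement property, after which Theorem \ref{thm:uniqsparse} finishes the argument. Fix $K\subseteq\{1,\dots,M\}$ with $|K|=2k$. The restricted vectors $\{\varphi_{n,K}\}_{n=1}^{N}$ are themselves i.i.d.\ standard Gaussian in $\Re^{2k}$, so with probability one every $2k$ of them are linearly independent. For any partition $S\sqcup S^c=\{1,\dots,N\}$ we have $|S|+|S^c|=4k-1$, so at least one of $|S|,|S^c|$ is $\geq 2k$; the corresponding restricted family then spans $\Re^{2k}$. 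Thus the $k$-complement property (applied at level $2k$) holds for this fixed $K$ almost surely.

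Since there are only $\binom{M}{2k}$ subsets $K$ of size $2k$, a union bound over this finite family shows that the entire $2k$-complement property holds almost surely; in particular such a collection $\Phi$ exists. Theorem \ref{thm:uniqsparse} now guarantees that $\xx_0$ is the unique solution to (\ref{eq:l0prob1}) from these $4k-1$ measurements, completing the proof. The only nonroutine point is the genericity argument at the heart of Case 2; it is mild here because all that is required of the restricted vectors is general position in $\Re^{2k}$, which fails only on a measure-zero algebraic subset.
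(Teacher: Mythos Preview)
Your proof is correct and follows essentially the same approach as the paper: handle the dense case $2M-1$ via Theorem~\ref{thm:inj} and the complement property, and in the sparse case show that $4k-1$ i.i.d.\ Gaussian vectors satisfy the $2k$-complement property (the paper isolates this as Lemma~\ref{lem:Gaussian}) and then apply Theorem~\ref{thm:uniqsparse}. Your union-bound argument over the $\binom{M}{2k}$ index sets $K$, combined with general position of the restricted Gaussian vectors in $\Re^{2k}$, is exactly the mechanism the paper uses, just presented slightly more directly.
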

Before proving the corollary, we state the following lemma:
\begin{lem}\label{lem:Gaussian}
A set of $4k-1$ independent samples from an $M$-dimensional standard
Gaussian distribution satisfies the
$2k$-complement property with probability 1.
\end{lem}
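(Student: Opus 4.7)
The plan is to combine a combinatorial pigeonhole argument with the standard fact that a square matrix with independent standard Gaussian entries is almost surely nonsingular. Fix any index set $K \subseteq \{1,\dots,M\}$ with $|K|=2k$ and any partition $\{1,\dots,N\}=S\cup S^c$ with $N=4k-1$. Since $|S|+|S^c|=4k-1$, at least one of the two parts has cardinality $\geq 2k$; by symmetry assume $|S|\geq 2k$. To certify that $\{\varphi_{n,K}\}_{n\in S}$ spans $\mathbb{R}^{2k}$, it suffices to exhibit a subset $T\subseteq S$ with $|T|=2k$ whose corresponding vectors are linearly independent.

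Next I would invoke the key probabilistic fact: if $\varphi_1,\dots,\varphi_N$ are drawn independently from the $M$-dimensional standard Gaussian, then for any fixed $K$ of size $2k$ the restrictions $\varphi_{n,K}$ are independent standard Gaussians in $\mathbb{R}^{2k}$. Hence for any fixed $T$ of size $2k$, the $2k\times 2k$ matrix whose columns are $\{\varphi_{n,K}\}_{n\in T}$ has absolutely continuous joint density and is therefore nonsingular with probability one (its determinant is a nontrivial polynomial in the entries, whose zero set has Lebesgue measure zero).

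Finally I would remove the ``fixed'' qualifier by a union bound. The number of triples $(K,S,T)$ with $K\subseteq\{1,\dots,M\}$, $|K|=2k$, $S\subseteq\{1,\dots,4k-1\}$ and $T\subseteq S$, $|T|=2k$ is finite, so the union of the associated measure-zero ``bad'' events still has probability zero. On the complementary full-measure event, for every $K$ and every $S$ we can pick a $T\subseteq S$ (or $T\subseteq S^c$) of size $2k$ for which the corresponding square submatrix is invertible, which gives the required spanning set. This establishes the $2k$-complement property almost surely.

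The only step requiring care is the combinatorial observation that $N=4k-1$ is exactly the threshold that forces $\max(|S|,|S^c|)\geq 2k$ for every $S$; a single measurement fewer would allow the split $|S|=|S^c|=2k-1$ in which neither side could possibly span a $2k$-dimensional space, regardless of genericity. Everything else is a routine application of the fact that Gaussian square matrices are a.s.\ invertible together with a finite union bound, so no serious technical obstacle is expected.
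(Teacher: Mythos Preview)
Your proposal is correct and is essentially the same argument as the paper's: both combine the pigeonhole observation that $\max(|S|,|S^c|)\geq 2k$ when $N=4k-1$ with the fact that a Gaussian matrix with at least as many columns as rows has full row rank almost surely, and then pass to a finite union bound over all $(K,S)$. The only cosmetic difference is that the paper bounds the bad event directly as a sum of products of two singularity probabilities (one of which is always zero), whereas you first extract a square $|T|=2k$ subblock; the content is identical.
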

\begin{proof}[Proof of Lemma \ref{lem:Gaussian}] Generate the
  collection of measurement vectors by independently drawing $4k-1$
  samples   from a $M$-dimensional standard Gaussian distribution.
Introduce $\Phii$ as the $M \times (4k-1)$-matrix obtained by
arranging the $4k-1$  vectors of $\Phi$ into a matrix. Let
$\Phii_{K,S}$ be the $|K| \times |S| $-matrix obtained by
picking out the rows indexed in $K$ and columns indexed by $S$.

Consider the probability that $\Phi$ does not satisfy the
$2k$-complement property:
\begin{align*}\nonumber
P(E) = P \big (& \exists
S,K : S \subset \{1,\dots,4k-1 \},|K|=2k, \nonumber \\  & \lambda_{min} (\Phii_{K,S}
\Phii_{K,S}^* )=\lambda_{min}(\Phii_{K,S^c}
\Phii_{K,S^c}^* ) =0 \big),
\end{align*}
where $\lambda_{min}$ denotes the smallest eigenvalue. We now use Boole's inequality for unions of events
\begin{align*}\nonumber
P(E)  & \\\ \leq  \hspace{-0.0cm}\sum_{s=1}^{4k-1} \hspace{-0.1cm}&\begin{pmatrix} \hspace{-0.1cm} 4k-1 \\
  s \hspace{-0.1cm}\end{pmatrix} \hspace{-0.15cm}
 \begin{pmatrix}\hspace{-0.1cm} M \\
  2k \hspace{-0.1cm}\end{pmatrix}\hspace{-0.1cm} P \big ( \text{a  }
2k \times s\text{-submatrix of } \Phii \text{
  is singular}  \big) \nonumber \\ \cdot P \big (& \text{a }
2k \times (4k-1-s) \text{-submatrix of } \Phii \text{ is singular}  \big) \nonumber \\=0, \quad &
\end{align*}
where we used that $P ( \text{a } 2k \times s\text{-submatrix  is
  singular} )  = 0$ when $s \geq 2k$ and $P ( \text{a } 2k \times (4k-1-s) \text{-submatrix  is
  singular}  ) = 0$ when $s <
2k$, which  follow from  the
Gaussianity of the entries of the submatrices.
\end{proof}

\begin{proof}[Proof of Corollary \ref{cor:lowbnd2}]
First, since $2M-1$ measurements are enough in the dense case,
this provides an upper bound on the number of measurements. Second,
Theorem~\ref{thm:uniqsparse} gives that $\y=\sA(\x)$ has a
unique $k$-sparse solution for $\min(4k-1,2M-1)$ measurements if the
collection satisfies the $2k$-complement property. Finally we have   from  Lemma \ref{lem:Gaussian}
that such a collection exists since a set of $4k-1$ samples from an  $M$-dimensional unit Gaussian
distribution  satisfies the
$2k$-complement property with probability 1.
\end{proof}
\subsection{Complex Measurement Vectors and Real Signal: Fourier
  Magnitude Measurements}\label{sec:Fourier}
A particularly interesting set of complex measurement vectors is  the incomplete Fourier basis. This special case is of great
importance since Fourier magnitude  measurements (FMMs) are inherent in
applications such as X-ray crystallography
\cite{Millane:90,Harrison:93}, speckle imaging and  blind channel estimation \cite{Ranieri2013}.

A complication in dealing with FMMs is
that some properties are entirely embedded in the phase of the Fourier
transform and therefore lost in the measuring process. In addition to the global
sign shift previously discussed, we therefore include
 mirroring (reverse the ordering of the elements in $\xx$)
 and shifts (circularly shift the elements in $\xx$) in the set of
 invariances $\T$ from here on. 

Before discussing the results, note that even if a Fourier basis
may  satisfy some complex equivalent of  the $k$-complement property, this is not enough
to provide uniqueness up to the invariances of $\T$. This was shown in \cite{Bloom77} by giving an example of two
signals, not equivalent with respect to $\T$, with the same
autocorrelation. Such signals can thereby never be uniquely
specified by the magnitude of their Fourier transforms. The
$k$-complement property  is therefore not enough to characterize
when a signal is uniquely defined by its FMMs.


In deriving guarantees for FMMs, we need the concept
of a \textit{collision free vector} introduced in  \cite[Def. 1]{Ranieri2013}.
\begin{df}[Collision free vector]Let $\xx(i)$ denote the $i$th element of the
  vector $\xx$. We say that $\xx$ is collision free if
  $\xx(i)-\xx(j) \neq \xx(k)-\xx(l)$, for all distinct $i,j,k,l \in
  \{i: i\in \{1,\dots,M\},\,\xx(i) \neq 0\}$.
\end{df}
We are now ready to state the following theorem on the  uniqueness of a
sparse  real solution given its FMMs.
\begin{thm}
Let $\{k_1,k_2,\dots, k_N\} \subseteq \{0,\dots, 2M-1\}$,
\begin{equation}
\varphi_n=
\begin{bmatrix}1
  \, e^{-i 2 \pi  k_n/2M} \, e^{-i 4 \pi  k_n/2M}  \dots  e^{-i 2 \pi  (2M-1) k_n/2M}
  \end{bmatrix}^{\mathsf{T}},
\end{equation} with $i =\sqrt{-1}$, and let $\mathcal{A}(\xx)
:\Re^M / \T\rightarrow \Re^N$ be defined by
\begin{equation}
(\sA(\xx))(n) =|\langle \varphi_n, \begin{bmatrix} \xx^{\mathsf{T}} & {\bf 0}_{1
    \times M  } \end{bmatrix}^{\mathsf{T}} \rangle |^2, \quad n=1,\dots,N.
\end{equation} Assume that we are given $\yy= \mathcal{A}(\xx_0) \in
\Re^N$ with $N$ a prime integer larger than
$2(\|\xx_0\|_0^2-\|\xx_0\|_0+1)$. Then a
collision free $\xx_0\in \Re^M$ is uniquely defined
by $\yy$
whenever 
\begin{itemize}
\item   $\|\xx_0\|_0 \neq 6$, or
\item $\|\xx_0\|_0 = 6$ and
  $\xx_0(i) \neq\xx_0(j)$, for some $i,j\in\{ i: i\in \{1,\dots,M\},\, \xx_0(i)\neq0 \}$.
\end{itemize}
\end{thm}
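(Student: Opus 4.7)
The plan is to decompose the proof into two natural steps: (a) show that the $N$ Fourier magnitude measurements determine the autocorrelation of $\xx_0$, and (b) invoke the autocorrelation-uniqueness result of Ranieri et al.\ to recover $\xx_0$ (up to the invariances in $\T$) from its autocorrelation under the collision-free hypothesis.

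For step (a), I would expand the measurements to obtain $y_n = \hat r(k_n/(2M))$, where $r(\tau) = \sum_i \xx_0(i)\xx_0(i+\tau)$ is the autocorrelation of the zero-padded signal, regarded as a length-$2M$ circular vector. Since $\xx_0$ has $k := \|\xx_0\|_0$ nonzero entries, the nonzero lags of $r$ are exactly the pairwise differences of support indices, so $r$ is $s$-sparse with $s := k^2 - k + 1$. Thus step (a) reduces to recovering an $s$-sparse signal of length $2M$ from $N$ samples of its DFT. For uniqueness, I would take any two candidate autocorrelations $r, r'$ matching all $N$ measurements; then $d := r - r'$ is at most $2s$-sparse and satisfies $\hat d(k_n/(2M)) = 0$ for every $n$. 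Casting these constraints as a Vandermonde system $A\,d|_{\operatorname{supp}(d)} = 0$ with entries $A_{n,j} = \exp(-i 2\pi \ell_j k_n/(2M))$, the hypothesis $N > 2s$ together with $N$ prime, via a Chebotarev/Tao-type argument, should yield full column rank of $A$ and force $d = 0$.

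For step (b), with $r$ in hand, I would apply the autocorrelation-uniqueness theorem of Ranieri et al.\ for collision-free sparse signals. Their result guarantees that $\xx_0$ is determined by $r$ up to the invariances in $\T$ (global sign, circular shifts, and mirroring), except for the exceptional configuration $\|\xx_0\|_0 = 6$ with $\xx_0$ taking a common value across its support. The two bulleted hypotheses in the present theorem are tailored to rule out exactly this exception.

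The main obstacle is step (a). Chebotarev's classical theorem guarantees full-spark DFT submatrices when the transform length is prime, but here the transform length is $2M$, not prime; only the number of samples $N$ is. The technical core is therefore to show that, for any choice of $N$ distinct frequencies in $\{0,\ldots,2M-1\}$ with $N$ prime and $N > 2s$, the resulting partial Vandermonde matrix retains full column rank on every subset of $2s$ columns. This likely requires either a sparse-polynomial root-counting argument on $(2M)$-th roots of unity, or a number-theoretic reduction exploiting the primality of $N$.
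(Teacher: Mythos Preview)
Your two-step decomposition---recover the autocorrelation from the partial Fourier magnitudes via sparse recovery, then invoke the Ranieri et al.\ result to pass from autocorrelation to signal---is exactly the paper's route. The sparsity bound $s=k^2-k+1$ for the autocorrelation of a collision-free $k$-sparse vector, the use of Wiener--Khinchin to linearize the measurements in the autocorrelation, and the appeal to the collision-free uniqueness theorem (with the $k=6$ caveat) for step (b) all match the paper line for line.

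Where you diverge is in how step (a) is closed. The paper does not attempt a direct Vandermonde/Chebotarev rank argument at all; it simply poses the recovery of the $s$-sparse autocorrelation vector (embedded in $\Re^{2M}$) from the $N$ linear Fourier samples as an $\ell_0$-minimization and cites Theorem~1 of Cand\`es--Romberg--Tao (2006) as a black box to conclude that $N$ prime with $N\geq 2s$ gives a unique minimizer. Your worry that the classical Chebotarev/Tao full-spark result requires the \emph{transform length} (here $2M$) rather than the number of samples $N$ to be prime is entirely legitimate: the paper does not address this point and relies wholly on the citation. So you should not try to re-derive the rank condition from scratch; follow the paper in invoking the compressive-sensing literature---but it is fair to flag that the cited theorem should be checked to confirm it actually applies with the primality hypothesis placed on $N$ rather than on $2M$.
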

The implication of the theorem is that we can guarantee a unique
solution  from FMMs as long as enough measurements are taken, the
signal is sparse enough, collision free and the support constrained.

\begin{proof}
If there are no collisions and $\xx_0 \in \Re^M$ is $k$-sparse, then
the autocorrelation $\aa \in \Re^{2M-1}$, defined as
\begin{equation}
\aa(l)=\sum_{s=\max\{1,1-l\}}^{\min\{M,M-l\}} \xx(s) \xx(s+l), \, l=1-M,\dots,M-1,
\end{equation} is $k^2-k+1$-sparse (see for instance
\cite{Ranieri2013}).
We further have that the autocorrelation is centro-symmetric,
$\aa(l)=\aa(-l),\,l=0,\dots, M-1,$ and
via Wiener-Khinchin's theorem that
$\aa(l),\,l=0,\dots,M-1,$ is related to $\yy(n), \,n=1,\dots,N,$ via $\yy(n) =$ 
$ \langle \varphi_n, \begin{bmatrix} \aa(0) \, \dots \,   \aa(M-1) \,
  0 \, \aa(M-1) \,  \aa(M-2) \, \dots \, \aa(1) \end{bmatrix}^{\mathsf{T}} \rangle.
%
%
$

Ignoring the symmetry, the problem of recovering the sparse autocorrelation from the partial FMMs $\yy$ can therefore be posed as
\begin{equation}\label{eq:l0prob}
\begin{aligned}
\min_{\qq\in \Re^{2M}} & \; \|\qq \|_0\\
\subjto \;  
\yy(n) &=  \langle \varphi_n,\qq \rangle,\quad n=1,\dots,N,
\\   0 & =  \qq(M+1). 
\end{aligned}
\end{equation}
This is a well studied problem in compressive sensing (see for instance
\cite{Donoho:06,Eldar:2012}) and using the
result of \cite[Thm. 1]{Candes:06}  it can be shown that if $N$ is
prime and satisfies
\begin{equation}\label{eq:bnd}
2\big( \|\xx_0\|_0^2-\|\xx_0\|_0+1\big )\leq N,
\end{equation}
 then \eqref{eq:l0prob} has a  unique solution. This because
 $\aa(1),\,\dots\,,\aa(M-1)$ contain $(\|\xx_0\|_0^2-\|\xx_0\|_0)/2$ nonzero elements at most.

Finally,  it was recently shown in \cite{Ranieri2013} that whenever there
are no collisions in $\xx_0$ and the following conditions are
satisfied, then  the autocorrelation
uniquely defines $\xx_0$:
\begin{itemize}
\item $\|\xx_0 \|_0 \neq 6$, or
\item $\|\xx_0 \|_0 = 6$ and $\xx_0(i) \neq\xx_0(j)$, for some $i,j\in\{ i:
 i\in \{1,\dots,M\},\, \xx_0(i)\neq0 \}$, or
\item  $\|\xx_0 \|_0 = 6$ and $\xx_0(i)=\xx_0(j)$, for all $i,j\in\{ i:
  i\in \{1,\dots,M\}, \xx_0(i)\neq0 \}$. In this case, the autocorrelation
uniquely defines $\xx_0$ almost surely.
\end{itemize}
Hence, under the conditions of the theorem, the
FMMs $\yy$ uniquely define $\aa$, and $\aa$
uniquely defines $\xx_0$, from which the theorem follows.
\end{proof}
Note that the theorem does not require the Fourier basis vectors to be
selected deterministically or
randomly and therefore holds for both.

\section{Conclusion}
Even though phase retrieval is a longstanding problem in optics
it is still not well understood whether a collection of measurements
provides an injective map or not.
 It was recently shown that the complement property gives necessary and
sufficient conditions for the uniqueness of a real signal and a real collection of
measurement vectors. Here we show that if the
measurement vectors
satisfy a weaker version
of the complement property then a sought sparse signal can be
guaranteed to be uniquely defined by associated intensity
measurements. We also consider a
complex collection of measurement vectors and Fourier magnitude
measurements. We show that  in general, $2(k^2-k+1)$  Fourier magnitude
measurements suffice to guarantee
uniqueness of a $k$-sparse signal.


%



\section*{Acknowledgment}

Ohlsson is partially supported by the Swedish Research
  Council in the Linnaeus center CADICS, the European Research Council
   under the advanced grant LEARN, contract 267381, by  a postdoctoral grant from the Sweden-America
   Foundation, donated by ASEA's Fellowship Fund, and by a postdoctoral
   grant from the Swedish Research Council. Eldar is supported in part
   by the Israel Science Foundation under Grant no. 170/10, and by the
   Ollendorf Foundation.




\bibliographystyle{IEEEtran}
\bibliography{refHO}
\end{document}